\newtheorem{theorem}{Theorem}[section]
\newtheorem{proposition}[theorem]{Proposition}
\newtheorem{lemma}[theorem]{Lemma}
\newtheorem*{theorem*}{theorem}
\theoremstyle{definition}
\newtheorem{definition}[theorem]{Definition}
\newtheorem*{definition*}{Definition}
\newtheorem{question}[theorem]{Question}
\newtheorem*{question*}{Question}
\newtheorem{remark}[theorem]{Remark}
\newtheorem{example}[theorem]{Example}
\begin{document}

\title{Tree-Based Unrooted Nonbinary Phylogenetic Networks}
\author{Michael Hendriksen}
\affil{\textit{Centre for Research in Mathematics, Western Sydney University, NSW, Australia}}
\date{}

\maketitle

\begin{abstract}
Phylogenetic networks are a generalisation of phylogenetic trees that allow for more complex evolutionary histories that include hybridisation-like processes. It is of considerable interest whether a network can be considered `tree-like' or not, which lead to the introduction of \textit{tree-based} networks in the rooted, binary context. Tree-based networks are those networks which can be constructed by adding additional edges into a given phylogenetic tree, called the \textit{base tree}. Previous extensions have considered extending to the binary, unrooted case and the nonbinary, rooted case. We extend tree-based networks to the context of unrooted, nonbinary networks in three ways, depending on the types of additional edges that are permitted. A phylogenetic network in which every embedded tree is a base tree is termed a \textit{fully tree-based} network. We also extend this concept to unrooted, nonbinary phylogenetic networks and classify the resulting networks. In addition, we derive some results on the colourability of tree-based networks, which can be useful to determine whether a network is tree-based.
\end{abstract}

\section{Introduction}
There is some discussion within the biological community about whether certain evolutionary histories are `tree-like' with some reticulation, or whether their history is not tree-like at all  \cite{Corel2016,Dagan2006,Doolittle2007,Martin2011}. This question has recently lead to the introduction of the concept of \textit{tree-based} networks by Francis and Steel in \cite{Francis2015}, which, roughly speaking, are phylogenetic trees with additional arcs placed between edges of the tree. In particular, the concept was introduced and applied to binary, rooted phylogenetic networks.

More recently, Francis, Huber and Moulton in \cite{Francis2017} extended the concept of tree-basedness to binary unrooted networks, and Jetten and van Iersel in \cite{Jetten2016} extended the concept to nonbinary rooted networks. In particular, in the nonbinary setting, it is possible to place an arc between an edge and a non-leaf vertex of the base tree (producing a vertex with at least degree $4$), which was not possible in the binary case. Jetten and van Iersel therefore distinguished the concepts of \textit{strictly tree-based} networks and \textit{tree-based networks} in the nonbinary setting, where in the former case edge-to-vertex edges are not permitted. 

A possible issue with considering whether a given evolutionary history is tree-like is that under the definition of a tree-based network it is possible for multiple non-isomorphic trees to be a base tree for a given network. In this circumstance, while a network may have a reasonable claim to be tree-like, a claim that it is like a \textit{particular} tree is much harder to define.

The issue is particularly magnified by the possibility for every single tree embedded in a network to be a base tree. Semple in \cite{Semple2016} showed that for binary rooted networks, the class of tree-based networks for which every embedded tree is a base tree (later termed \textit{fully tree-based} networks in \cite{Francis2017}) coincides with the familiar class of tree-child networks. Later, Francis, Huber and Moulton in \cite{Francis2017} showed that in the binary unrooted setting, a network is fully tree-based if and only if it is a level-$1$ network.

This article extends the study of tree-based networks to the \textit{non-binary, unrooted} setting. Preliminary phylogenetic and graph-theoretic terminology is introduced in Section $2$. In Section $3$, we define strictly tree-based and tree-based networks in the unrooted nonbinary setting, and introduce a third analogue, the \textit{loosely tree-based} network. These are then characterised in terms of spanning trees of the network, generalising the characterisation by Francis, Huber and Moulton in  \cite{Francis2017}.  In Section $4$ we then characterise the nonbinary unrooted analogues of fully tree-based networks and provide some constructions for these networks of arbitrary level where this is possible. In Section $5$ we end with some results on colourability of tree-based networks, which can assist in identifying networks that are not tree-based or not strictly tree-based.

\section{Phylogenetic Networks}

In this article we examine phylogenetic networks that are unrooted and nonbinary, so our definition reflects this, in contrast to \cite{Francis2015} and \cite{Jetten2016}.

\begin{definition}
Let $X$ be a finite, non-empty set. A \textit{phylogenetic network} is a connected graph $(V,E)$ with $X \subseteq V$ and no degree $2$ vertices, such that the set of degree $1$ vertices (referred to as \textit{leaves}) is precisely $X$. A \textit{phylogenetic tree} is a phylogenetic network that is also a tree.
\end{definition}

Throughout this paper, all trees and networks are unrooted, nonbinary phylogenetic trees and networks unless otherwise specified. We note that by nonbinary we do not mean that there must be vertices of degree $4$ or more, just that they are permitted. If $N$ does contain a vertex of degree $4$ or more, then we say that $N$ is \textit{strictly nonbinary}.

To be able to define tree-based networks in the next section, it is necessary to define an operation used on phylogenetic networks in their construction.

\begin{definition}
Let $N$ be a network with some edge $e=(v,w)$. The operation in which we delete $e$, add a new vertex $z$, then add the edges $(v,z)$ and $(z,w)$ is termed \textit{subdividing} $e$, and the new vertex $z$ is referred to as an \textit{attachment point}. The inverse operation of subdivision is termed \textit{suppressing} a vertex, in which a degree $2$ vertex $z$ with edges $(v,z)$ and $(z,w)$ is deleted and then the edge $(v,w)$ is added.
\end{definition}

 Recall the following standard definitions from \cite{Gambette2012}.

\begin{definition}
Let $N$ be a network on edge set $X$. A \textit{cut-edge} is an edge such that $N-e$ is a disconnected graph. A \textit{pendant} edge is a cut-edge where one of the connected components of $N-e$ is trivial. We refer to $N$ as \textit{simple} if all cut-edges of $N$ are pendant. A \textit{blob} is a maximal connected subgraph of $N$ with no cut-edges that is not a vertex.
\end{definition}

Given a network $N$ and a blob $B$ in $N$, we define a simple network $B_N$ by taking the union of $B$ and all cut-edges in $N$ incident with $B$, where the leaf-set of $B_N$ is just the set of end vertices of these cut-edges that are not already a vertex in $B$.

In order to classify fully tree-based networks and loosefully tree-based networks in Section 4, we require one final definition.

\begin{definition}
Let $N$ be a network and $v$ a vertex of $N$. We say that $v$ is a \textit{cut-vertex} if deletion of $v$ and all edges incident to $v$ from $N$ forms a disconnected graph. The connected components of this disconnected graph are referred to as the  \textit{cut-components} of $v$ in $N$.
\end{definition}

Note that all vertices with a pendant edge are cut-vertices, but not all cut-vertices have pendant edges. For an example, see Figure \ref{KFully} (ii), where the central black vertex is a cut-vertex but has no pendant edges.

\section{Nonbinary Unrooted Tree-Based Networks}

Tree-based networks were initially introduced by Francis and Steel in  \cite{Francis2015}, in which they were constructed by subdividing some number of edges of a binary, rooted phylogenetic tree, then adding additional edges between pairs of attachment points so that only one new edge is added to each attachment point.

In the nonbinary case, we have more freedom. In this case, we can additionally add edges between attachment points and the original vertices of the tree, or even between two vertices in the original tree, as we no longer need to worry about the resulting network being binary. \cite{Jetten2016} consider this idea in the rooted, nonbinary setting. If the new edges are exclusively between attachment points with no more than one edge at each attachment point, they termed the network \textit{strictly tree-based}. If edges between attachment points and vertices of the original tree are allowed, they termed the network \textit{tree-based}. We will formally define these in the unrooted nonbinary case shortly. In this article, we will additionally consider the possibility in which we allow edges between two vertices in the original tree and more than one additional edge incident to an attachment point.

In \cite{Francis2017}, a binary, unrooted tree-based network is defined to be a network $N$ on $X$ which contains a spanning tree on the same leaf-set $X$. If we consider a binary network $N$ with some spanning tree $T$ and some edge $e \in N-T$, then $e$ can only be incident to degree $2$ vertices of $T$. If $e$ were incident to some degree $1$ vertex of $T$ then $N$ and $T$ would not have the same leaf set, and if $e$ were incident to some vertex of degree $3$ or more in $T$, then $N$ would be strictly nonbinary, contradicting the fact that $N$ is binary. Additionally, no pair of edges $e_1, e_2 \in N-T$ can be incident to the same vertex $v$ of $T$, as this would force $v$ to have degree $4$ or more. Observe that, with the exception of the case where $e$ is incident to a degree $1$ vertex, the limitation was due to $N$ being binary. 

Therefore, if $N$ is a nonbinary, unrooted network on $X$, then given some spanning tree $T$ of $N$ on $X$, some edge $e \in N-T$ may be between a pair of degree $2$ vertices of $T$, between a degree $2$ vertex and a degree $3$ or more vertex, or between a pair of degree $3$ or more vertices. From the point of view of constructing $N$ from a base tree $T$, this respectively coincides with adding an edge between attachment points, between an attachment point and an original vertex of the tree, or between two original vertices of the tree. We will refer to networks that satisfy this nonbinary analogue of the spanning tree definition from \cite{Francis2017} by the term \textit{loosely tree-based} networks.

As spanning trees are well-known and well-understood, we formally define loosely tree-based networks, as well as the nonbinary unrooted analogues of tree-based and strictly tree-based networks, in terms of spanning trees. We will then show that these definitions are equivalent to the attachment point definitions.

\begin{definition}
\label{SpanDef}
Let $N$ be a network on leaf-set $X$. Then $N$ is referred to as

\begin{enumerate}
\item \textit{loosely tree-based} on $X$ if there exists a spanning tree in $N$ whose leaf-set is equal to $X$,
\item \textit{tree-based} on $X$ if there exists a spanning tree $T$ of $N$ such that $T$ contains all edges between two vertices of degree $4$ or more, and all degree $2$ vertices of $T$ were degree $3$ in $N$, and
\item \textit{strictly tree-based} on $X$ if there exists a spanning tree $T$ in $N$ whose leaf-set is equal to $X$ and $T$ contains every edge incident to the vertices of degree at least $4$.
\end{enumerate}
\end{definition}

It follows that the class of strictly tree-based networks is contained in the class of tree-based networks, which are in turn contained in the class of loosely tree-based networks. The distinction between the three types of tree-basedness is shown by the networks in Figure \ref{TBExamples}.

\begin{figure}[H]
\centering
\begin{tikzpicture}
\draw (0,1) --(1,2)--(2,1)--(3,2)--(4,1);
\draw (0,3) --(1,2)--(3,2)--(4,3);
\draw (3,0)--(2,1)--(1,0);

\draw[fill] (1,2) circle [radius=1.5pt];
\draw[fill] (2,1) circle [radius=1.5pt];
\draw[fill] (3,2) circle [radius=1.5pt];
\draw[fill] (0,1) circle [radius=1.5pt];
\draw[fill] (4,1) circle [radius=1.5pt];
\draw[fill] (0,3) circle [radius=1.5pt];
\draw[fill] (4,3) circle [radius=1.5pt];
\draw[fill] (3,0) circle [radius=1.5pt];
\draw[fill] (1,0) circle [radius=1.5pt];

\node[above=-0.1cm] at (2,2) {$\mathstrut e$};
\node[below left=-0.1cm] at (1.5,1.5) {$\mathstrut e$};
\node[below right=-0.1cm] at (2.5,1.5) {$\mathstrut e$};
\node at (2,-1) {$\mathstrut (i)$};
\end{tikzpicture}
\hspace{1cm}
\begin{tikzpicture}
\draw (0,1) --(1,2)--(2,1)--(3,2)--(4,1);
\draw (0,3) --(1,2)--(3,2)--(4,3);
\draw (2,1)--(2,0);

\draw[fill] (1,2) circle [radius=1.5pt];
\draw[fill] (2,1) circle [radius=1.5pt];
\draw[fill] (3,2) circle [radius=1.5pt];
\draw[fill] (0,1) circle [radius=1.5pt];
\draw[fill] (4,1) circle [radius=1.5pt];
\draw[fill] (0,3) circle [radius=1.5pt];
\draw[fill] (4,3) circle [radius=1.5pt];
\draw[fill] (2,0) circle [radius=1.5pt];

\node[below left=-0.1cm] at (1.5,1.5) {$\mathstrut e$};
\node[below right=-0.1cm] at (2.5,1.5) {$\mathstrut e$};
\node at (2,-1) {$\mathstrut (ii)$};
\end{tikzpicture}
\hspace{1cm}
\begin{tikzpicture}
\draw (0,3) --(1,2)--(2,1)--(3,2)--(4,1);
\draw (1,2)--(3,2)--(4,3);
\draw (2,1)--(2,0);

\draw[fill] (1,2) circle [radius=1.5pt];
\draw[fill] (2,1) circle [radius=1.5pt];
\draw[fill] (3,2) circle [radius=1.5pt];
\draw[fill] (0,3) circle [radius=1.5pt];
\draw[fill] (4,1) circle [radius=1.5pt];
\draw[fill] (4,3) circle [radius=1.5pt];
\draw[fill] (2,0) circle [radius=1.5pt];

\node[below left=-0.1cm] at (1.5,1.5) {$\mathstrut e$};
\node at (2,-1) {$\mathstrut (iii)$};
\end{tikzpicture}
\caption{(i) A loosely tree-based network that is not tree-based; (ii) A network that is tree-based but not strictly tree-based; (iii) A strictly tree-based network. The edges labelled $e$ are possible edges that were added to a base tree to construct the network.}
\label{TBExamples}
\end{figure}
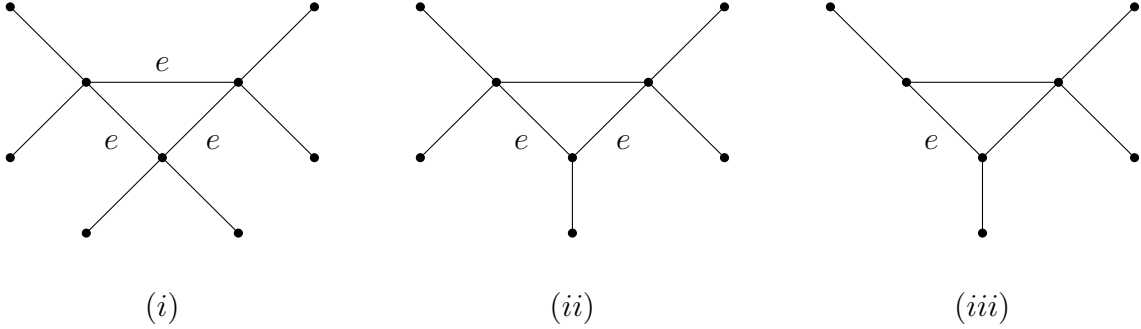

We note that these definitions provide immediate access to some insights that may be tougher to see using a construction-based definition. For example, we can see that if a network contains a cycle consisting of vertices of degree $4$ it cannot be tree-based, and if it contains a cycle that does not have two adjacent degree $3$ vertices it cannot be strictly tree-based.

We will now show in Theorems \ref{TreeBaseSpan} and \ref{StrictBaseSpan} that our spanning tree definitions are equivalent to the familiar tree-based definitions from previous papers.

\begin{theorem}
\label{TreeBaseSpan}
Let $N$ be a network on a leaf-set $X$. Then the following are equivalent:

\begin{enumerate}
\item $N$ is tree-based.
\item $N$ can be obtained by taking a tree $T$, subdividing edges of $T$ to form attachment points and adding edges either between attachment points or between an attachment point and an original vertex of $T$ (so that each attachment point now has degree $3$). 
\end{enumerate}
\end{theorem}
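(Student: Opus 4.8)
The plan is to prove the equivalence by arguing both directions directly from the definitions, starting from the spanning-tree characterisation in Definition~\ref{SpanDef}.

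For the direction $(2)\Rightarrow(1)$, I would start with the base tree $T$ together with the set of added edges and the attachment points they were subdivided into. The spanning tree witnessing tree-basedness is essentially $T$ itself, but reinterpreted inside $N$: each original edge of $T$ that was subdivided becomes a path through its attachment points, and the union of all these paths is a spanning subgraph of $N$ which is connected and acyclic, hence a spanning tree $T'$. I would then check the two extra conditions in clause~(2) of Definition~\ref{SpanDef}. First, the degree~$2$ vertices of $T'$ are precisely the attachment points that received no added edge, and each such vertex has degree~$3$ in $N$ once we... wait, actually an attachment point that received no added edge would have degree~$2$ in $N$, contradicting the no-degree-$2$ hypothesis on $N$; so in fact every attachment point has exactly one added edge, has degree~$3$ in $N$, and is degree~$2$ in $T'$ — matching the requirement exactly. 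Second, an edge of $N$ between two vertices of degree~$\ge 4$ cannot be an added edge (added edges always meet an attachment point, which has degree~$3$), so it is an original edge of $T$, possibly subdivided; but a subdivided original edge would introduce a degree-$2$ attachment point unless that point carries an added edge, and then the endpoints of the pieces are still original vertices — I need to be slightly careful here and argue that such an edge, after subdivision, still lies entirely in $T'$. So $T'$ satisfies both conditions and $N$ is tree-based.

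For the direction $(1)\Rightarrow(2)$, suppose $N$ has a spanning tree $T'$ as in Definition~\ref{SpanDef}(2). I would recover a base tree $T$ by suppressing in $T'$ exactly those degree-$2$ vertices of $T'$; call the result $T$, a phylogenetic tree on $X$ (leaves are unchanged since every degree-$1$ vertex of $T'$ is a degree-$1$ vertex of $N$, which is an element of $X$). The suppressed vertices are exactly the attachment points, and the edges of $N - T'$ are exactly the added edges. I then need to verify that each added edge is incident either to two attachment points or to an attachment point and an original vertex of $T$ — equivalently, that no edge of $N-T'$ joins two original (non-suppressed) vertices. An original vertex of $T'$ has degree $\ge 3$ in $T'$; if it has degree exactly~$3$ in $T'$ then by the "no degree-$2$ vertex" hypothesis on $N$ combined with the condition on $T'$ it cannot be degree~$2$ in... hmm, here is where the real content sits. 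An added edge joining two degree-$3$ vertices of $T'$ would make both of them degree~$\ge 4$ in $N$, and then condition~(2) of Definition~\ref{SpanDef} forces that edge to lie in $T'$ — contradiction, since it is in $N-T'$. An added edge with exactly one endpoint of degree~$3$ in $T'$: that endpoint becomes degree~$\ge 4$ in $N$, but the other endpoint might already have been degree~$\ge 4$ in $T'$; this case is allowed (attachment-point-to-original, where the "original vertex" has high degree). Finally I must confirm every attachment point ends with degree exactly~$3$: it has degree~$2$ in $T'$, so I need each to receive exactly one added edge, which follows because a degree-$2$ vertex of $N$ is forbidden (so at least one) and because two added edges at one attachment point would again clash with condition~(2) (the attachment point would be degree~$\ge 4$ and one of its added edges would be forced into $T'$). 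Assembling these observations gives the construction in~(2).

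The main obstacle I anticipate is the bookkeeping around edges between high-degree vertices and the precise interplay of the two side-conditions in Definition~\ref{SpanDef}(2): one has to handle separately (a) added edges between two fresh attachment points, (b) added edges between an attachment point and a genuinely high-degree original vertex, and (c) the impossibility of added edges between two vertices that are both degree~$3$ in $T'$, while simultaneously tracking that suppression does not create or destroy leaves and that no attachment point is left with degree~$2$ in $N$. None of the individual steps is deep, but stating the correspondence between "attachment point" and "suppressed degree-$2$ vertex of the spanning tree" cleanly, and checking it respects the degree constraints in both directions, is where care is needed. I would organise the write-up as two clearly separated implications, each opening by naming the relevant tree ($T'$ in one direction, $T$ in the other) and then verifying the degree conditions vertex-class by vertex-class.
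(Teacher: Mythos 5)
Your proposal is correct and follows essentially the same route as the paper: the subdivided tree serves as the spanning tree in one direction, and suppressing the degree-$2$ vertices of the spanning tree recovers the base tree in the other, with the conditions of Definition~\ref{SpanDef}(2) ruling out added edges between two high-degree vertices and forcing exactly one added edge per attachment point. The only nitpick is that ``exactly one added edge per attachment point'' follows directly from the requirement that degree-$2$ vertices of the spanning tree be degree $3$ in $N$, not from the condition on edges between vertices of degree $\ge 4$ as your final parenthetical suggests.
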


\begin{proof}
Suppose $N$ was obtained by taking a tree $T$ and performing the above procedure. Let $T^+$ be the tree $T$ with the required attachment points added. Then $T^+$ is necessarily a spanning tree of $N$ on $X$, as no step in the procedure adds vertices or adds an edge to a leaf. As we add precisely one edge to each degree $2$ vertex (that is, attachment point) of $T^+$, all vertices of $T^+$ of degree $2$ are degree $3$ in $N$. Furthermore, no edges are added between a pair of vertices of degree $3$ or more in $N$, so $T^+$ contains all edges of $N$ that lie between two vertices of degree $4$ or more. Hence $N$ is tree-based.

Now suppose $N$ is tree-based, so there exists a spanning tree $T$ of $N$ with leaf-set $X$ such that all degree $2$ vertices in $T$ were degree $3$ vertices in $N$, and $T$ contains all edges between two vertices of degree $4$ or more. Denote the tree obtained by suppressing any degree $2$ vertices of $T$ by $T^-$. Then we can subdivide edges of $T^-$ to make $T$, and add edges between the attachment points and either other attachment points or original vertices to make $N$ (keeping each attachment point to degree $3$). As all edges deleted from $N$ to make $T$ are incident to at least one degree $3$ vertex, this means all required additional edges are incident to an attachment point. As all degree $2$ vertices in $T$ were degree $3$ vertices in $N$ it is not possible for two edges in $N-T$ to be incident to the same attachment point. It follows that $T^-$ is the tree required by the theorem.
\end{proof}

We now consider strictly tree-based networks.

\begin{theorem}
\label{StrictBaseSpan}
Let $N$ be a  network on leaf-set $X$. Then the following are equivalent:

\begin{enumerate}
\item $N$ is strictly tree-based.
\item $N$ can be obtained by taking a tree $T$ , subdividing edges of $T$ to form attachment points, and adding edges between those attachment points (so that each attachment point has degree $3$).
\end{enumerate}
\end{theorem}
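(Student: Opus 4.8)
The plan is to follow the template of the proof of Theorem~\ref{TreeBaseSpan}, strengthening the degree bookkeeping to reflect the extra requirement that the witnessing spanning tree contain \emph{every} edge at a vertex of degree at least $4$, not merely those joining two such vertices.

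For the direction $(2)\Rightarrow(1)$, I would start from a tree $T$, form $T^+$ by subdividing the chosen edges, and add edges only between the resulting attachment points. As in Theorem~\ref{TreeBaseSpan}, $T^+$ is a spanning tree of $N$ with leaf-set $X$, because subdivision introduces no new vertices of $X$ and no added edge meets a leaf. Each attachment point receives exactly one added edge, so the degree-$2$ vertices of $T^+$ are precisely the degree-$3$ vertices of $N$ that arose as attachment points; in particular no attachment point has degree $\geq 4$ in $N$. Hence every vertex of degree $\geq 4$ in $N$ is an original vertex of $T$, and since added edges run between attachment points, none of them is incident to such a vertex. Therefore $T^+$ contains every edge of $N$ incident to a vertex of degree $\geq 4$, which is exactly the strictly tree-based condition.

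For $(1)\Rightarrow(2)$, suppose $T$ is a spanning tree of $N$ with leaf-set $X$ containing every edge incident to a vertex of degree $\geq 4$ in $N$. First I would check that every degree-$2$ vertex $v$ of $T$ has degree exactly $3$ in $N$: its $N$-degree is at least $2$, hence at least $3$ since $N$ has no degree-$2$ vertices, and it cannot be $\geq 4$, for then all edges at $v$ would lie in $T$ and $v$ would not have degree $2$ in $T$. Thus each such $v$ is incident to exactly one edge of $N-T$. Next I would show that every edge $e\in N-T$ joins two degree-$2$ vertices of $T$: $e$ cannot meet a leaf, since the unique edge at a leaf lies in the spanning tree, and it cannot meet a vertex of degree $\geq 4$ in $N$ by hypothesis, so both ends of $e$ have $N$-degree $3$ and hence $T$-degree $2$. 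Let $T^-$ be obtained from $T$ by suppressing all its degree-$2$ vertices; this is a phylogenetic tree with leaf-set $X$. Subdividing the appropriate edges of $T^-$ recovers $T$ with its degree-$2$ vertices as attachment points, and re-adding the edges of $N-T$ — each joining a pair of attachment points, with exactly one at each attachment point — reconstructs $N$ so that every attachment point has degree $3$. Hence $N$ admits the required construction.

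I expect the only delicate point to be the backward direction's claim that no edge of $N-T$ touches a leaf or a vertex of degree $\geq 4$; this is precisely where the leaf-set hypothesis and the ``contains every edge at a degree-$\geq 4$ vertex'' hypothesis are both needed, and it is what forces the edges of $N-T$ to form a matching on the degree-$2$ vertices of $T$. Everything else is routine degree counting parallel to Theorem~\ref{TreeBaseSpan}; in fact one could alternatively deduce this result from Theorem~\ref{TreeBaseSpan} by noting that the strictly tree-based construction is exactly the special case in which no added edge meets an original vertex of the base tree.
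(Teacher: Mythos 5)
Your proposal is correct and follows essentially the same route as the paper's proof: the forward construction yields $T^+$ as a spanning tree avoiding all edges at degree-$\geq 4$ vertices, and the backward direction shows every edge of $N-T$ joins two degree-$2$ vertices of $T$ (each receiving exactly one such edge), so that suppressing and resubdividing recovers the construction. Your degree bookkeeping is in fact slightly more explicit than the paper's, but there is no substantive difference.
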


\begin{proof}
Suppose $N$ is strictly tree-based, so contains a spanning tree $T$ that includes all edges incident to vertices of degree $4$ or more. Then all edges in $N-T$ are between vertices of degree $3$ in $N$, as all edges incident to vertices of degree $1$ or degree $4$ are in $T$, and there are no vertices of degree $2$. 

Let $e$ be an edge in $N-T$. There cannot be two edges in $N-T$ incident to the same degree $3$ vertex, as otherwise either $T$ contains a leaf that $N$ does not or $T$ does not span every vertex. Hence the deletion of $e$ from $N$ causes there to be two degree $2$ vertices, which may be suppressed to obtain an edge. It follows that $N$ is obtained by taking a tree, subdividing edges to form attachment points and adding edges between those attachments points (so that each attachment point has degree $3$).

Now suppose $N$ is obtained by taking some tree $T$, then subdividing edges of $T$ and connecting those vertices obtained by subdivision. Let the subdivided subtree of $N$ corresponding to $T$ be denoted $T^+$. Then clearly $T^+$ is a spanning tree, as it contains every vertex in $N$. We then observe that every vertex of degree $4$ or more is contained within the spanning tree, since the additional arcs are only incident to vertices of degree $3$.
\end{proof}

We now provide an example of a strictly nonbinary network that is not tree-based.

\begin{example}
Figure \ref{NotLoose} provides an example of a strictly nonbinary level-$5$ network that is not loosely tree-based. Observe that there is no Hamiltonian path between the two leaves. As such there can be no spanning tree on the same leaf set. This example also demonstrates the distinction between containing a spanning tree \textit{on the same leaf set} and merely having a spanning tree, as this example contains several spanning trees.

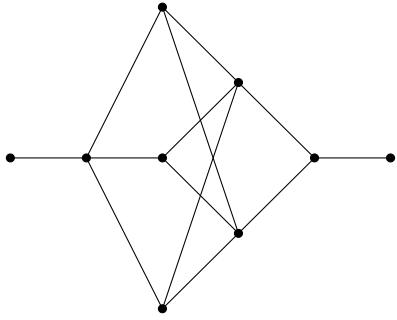
\begin{figure}[H]
\centering
\begin{tikzpicture}
\draw (0,2) --(2,2)--(3,3)--(4,2)--(5,2);
\draw (1,2) --(2,4) --(3,3);
\draw (2,4) --(3,1)--(4,2);
\draw (2,2) --(3,1);
\draw (1,2) --(2,0);
\draw (2,0) --(3,3);
\draw (2,0) --(3,1);

\draw[fill] (0,2) circle [radius=1.5pt];
\draw[fill] (1,2) circle [radius=1.5pt];
\draw[fill] (2,2) circle [radius=1.5pt];
\draw[fill] (2,4) circle [radius=1.5pt];
\draw[fill] (2,0) circle [radius=1.5pt];
\draw[fill] (3,3) circle [radius=1.5pt];
\draw[fill] (3,1) circle [radius=1.5pt];
\draw[fill] (4,2) circle [radius=1.5pt];
\draw[fill] (5,2) circle [radius=1.5pt];
\end{tikzpicture}
\caption{Example of an unrooted strictly nonbinary network that is not loosely tree-based.}
\label{NotLoose}
\end{figure} 
\end{example}

Francis, Huber and Moulton prove a decomposition theorem for unrooted tree-based binary networks in \cite{Francis2017}. A similar one exists for nonbinary tree-based networks. Recall that a blob $B$ is a maximal connected subgraph of $N$ with no cut-edges that is not a vertex, and that the simple network $B_N$ is constructed by taking the union of $B$ and all cut-edges in $N$ incident with $B$, where the leaf-set of $B_N$ is just the set of end vertices of these cut-edges that are not already a vertex in $B$.

\begin{proposition}
Suppose $N$ is a network. Then $N$ is loosely, normally or strictly tree-based if and only if $B_N$ is (respectively) loosely, normally or strictly tree-based for every blob $B$ in $N$.
\end{proposition}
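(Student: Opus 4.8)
The plan is to reduce the statement to a bijection between spanning trees of $N$ and families of spanning trees of the $B_N$'s that preserves all the degree information appearing in Definition \ref{SpanDef}. First I would record two structural facts. Since deleting a cut-edge disconnects $N$, every cut-edge of $N$ lies in every spanning tree of $N$; and the graph obtained from $N$ by contracting each blob to a single vertex is a tree whose edges are precisely the cut-edges of $N$. Consequently a set of edges $T$ is a spanning tree of $N$ if and only if it is the union of all cut-edges of $N$ with a spanning tree of each blob. Second, for a blob $B$ and a vertex $v\in B$, the edges of $N$ incident with $v$ are exactly the edges of $B$ at $v$ together with the cut-edges of $N$ at $v$, and every one of these cut-edges is incident with $B$, hence occurs as a pendant edge of $B_N$; therefore $\deg_N(v)=\deg_{B_N}(v)$, the vertices of $B$ are precisely the internal vertices of $B_N$, and the leaves of $B_N$ are the ends of the cut-edges incident with $B$ that lie outside $B$.

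Next I would set up the correspondence: given a spanning tree $T$ of $N$, for each blob $B$ put $T_B:=(T\cap E(B))\cup\{\text{pendant edges of }B_N\}$; conversely, given a spanning tree $T_B$ of $B_N$ for each $B$, put $T:=\bigcup_B (T_B\cap E(B))\cup\{\text{cut-edges of }N\}$. Using the structural facts, each $T_B$ so defined is a spanning tree of $B_N$ (any spanning tree of $B_N$ contains every pendant edge and restricts to a spanning tree of $B$), and the two assignments are mutually inverse. The key property is that this correspondence preserves the relevant degrees: for $v\in B$ the edges of $T$ at $v$ are the $T$-edges of $B$ at $v$ plus the cut-edges of $N$ at $v$, while the edges of $T_B$ at $v$ are the $T_B$-edges of $B$ at $v$ plus the pendant edges of $B_N$ at $v$, and by the second structural fact these two lists coincide; hence $\deg_T(v)=\deg_{T_B}(v)$. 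A vertex of $N$ lying in no blob has all incident edges cut-edges, so it has full degree in every spanning tree and causes no trouble.

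With this dictionary each of the three notions transfers blob by blob. The leaf-set condition says that every internal vertex of $N$ has degree at least $2$ in the spanning tree; since the internal vertices of $B_N$ are exactly the vertices of $B$, and non-blob vertices of $N$ automatically satisfy this, $T$ has leaf-set $X$ if and only if each $T_B$ has leaf-set that of $B_N$. For the condition ``$T$ contains every edge incident to a vertex of degree $\geq 4$'' (and likewise ``every edge between two vertices of degree $\geq 4$''), I would use that a vertex of $N$ has degree $\geq 4$ if and only if the corresponding vertex of the $B_N$ containing it has degree $\geq 4$; that such an edge of $N$ is either a cut-edge, hence automatically in $T$ and a pendant edge of the relevant $B_N$, or a blob-edge, in which case it is an edge of some $B_N$ incident to (resp.\ between) degree-$\geq 4$ vertices of $B_N$; and conversely that such an edge of $B_N$ is either a pendant edge, i.e.\ a cut-edge of $N$, or a blob-edge of $N$ of the same type. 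Finally ``every degree-$2$ vertex of $T$ was degree $3$ in $N$'' transfers because a degree-$2$ vertex of $T$ must be a blob vertex (non-blob vertices have spanning-tree-degree $\geq 3$), and there $\deg_T(v)=\deg_{T_B}(v)$ and $\deg_N(v)=\deg_{B_N}(v)$. Chaining these equivalences over all blobs yields the proposition for the loose, the tree-based, and the strict notion simultaneously.

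I expect the only real difficulty to be bookkeeping rather than ideas: one must keep straight that a cut-edge joining two distinct blobs yields a pendant edge in each of the two corresponding $B_N$'s (with its far endpoint a leaf in each), treat vertices belonging to no blob separately, and remember throughout that it is ``spanning tree with leaf-set $X$'', not merely ``spanning tree'', that is being decomposed --- the distinction highlighted by Figure \ref{NotLoose}. Once the correspondence and the identity $\deg_T(v)=\deg_{T_B}(v)$ are pinned down, everything else is a short, uniform case check.
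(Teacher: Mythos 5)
Your argument is correct, and it is essentially the intended one: the paper's own ``proof'' is a single line deferring to Proposition~1 of \cite{Francis2017}, which rests on exactly the blob decomposition you describe (every cut-edge lies in every spanning tree, spanning trees of $N$ correspond to choices of spanning trees of the blobs, and $\deg_N(v)=\deg_{B_N}(v)$ for $v\in B$). Your write-up adds the verification, omitted in the paper, that the correspondence also preserves the extra degree conditions of Definition~\ref{SpanDef} needed for the tree-based and strictly tree-based variants, which is the only point where the nonbinary case differs from the cited binary one.
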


\begin{proof}
Similar proof to Proposition $1$ in \cite{Francis2017}. 
\end{proof}

Again as in \cite{Francis2017} we can immediately classify networks on a single leaf.

\begin{remark}
Suppose $N$ is a network on $\{ x \}$. Then $N$ is tree-based if and only if $N= (\{ x \}, 0)$.
\end{remark}

\section{Fully Tree-Based Networks}

A \textit{fully tree-based} network $N$ on leaf-set $X$ is a network where every embedded tree with leaf-set $X$ is a base tree, with the original concept coming from \cite{Semple2016} and the terminology from \cite{Francis2017}. Of course, in the nonbinary setting we must be clear about what sort of base tree we are referring to - strict, normal or loose. 

\begin{definition}
Let $N$ be a network on leaf-set $X$. Then $N$ is \textit{strictfully, fully,} or \textit{loosefully} tree-based if every embedded tree with leaf-set $X$ is a base tree in the (respectively) strict, usual or loose sense.
\end{definition}

 In the binary case, a network is fully tree-based if and only if it is a level-$1$ network \cite{Francis2017}. Correspondingly, we will show that in the nonbinary unrooted case, a simple network $N$ is \textit{strictfully} tree-based if and only if it is a binary level-$1$ network or a star tree. In general this means that a network $N$ is strictfully tree-based if and only if for all blobs $B$ of $N$, $B_N$ is a binary level-$1$ network.

For example, let $T$ be a strictly nonbinary tree such that there is a trio of vertices $v_1, v_2, v_3$ of degree $1$ or $3$ with edges $e_1=(v_1,v_2)$ and $e_2 = (v_2,v_3)$. Then by adding an attachment points each to $e_1$ and $e_2$ and connecting them to form a network $N$, we obtain a biconnected component $B$ so that $B_N$ is a binary level $1$ network (even though $N$ is not binary). It follows that $N$ is strictfully tree-based as the only blob in $N$ is binary and level-$1$. We will now find characterisations of simple strictfully, fully and loosefully tree-based networks, from which similar general results can be drawn.

We make the following fairly trivial but extremely useful remark

\begin{remark}
\label{SubEmbed}
Let $N$ be a network on leaf-set $X$, possibly with degree $2$ vertices. Suppose $S$ is a connected subgraph of $N$ with leaf-set $X$. Then $S$ contains an embedded tree with leaf-set $X$, contained within the network obtained by the union of the shortest paths between leaves.
\end{remark}

We will now prove our statement regarding strictfully tree-based networks, which is the direct analogue of the binary result on fully tree-based networks from \cite{Francis2017}.

\begin{theorem}
Let $N$ be a simple network. Then $N$ is strictfully tree-based if and only if $N$ is a level-$1$ binary network or a star tree.
\end{theorem}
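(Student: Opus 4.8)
The plan is to work with the spanning-tree reformulation in Definition~\ref{SpanDef} and Theorem~\ref{StrictBaseSpan}, together with Remark~\ref{SubEmbed}, in the spirit of the binary argument of Francis, Huber and Moulton~\cite{Francis2017}.

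\emph{The ``if'' direction is short.} If $N$ is a star tree then the only embedded tree with leaf-set $X$ is $N$ itself, and a tree is trivially a strict base tree for itself, so $N$ is strictfully tree-based. If $N$ is a binary level-$1$ network then, as $N$ is simple, its unique blob is a cycle $C$ with exactly one pendant leaf at each vertex of $C$. Any embedded tree $T$ with leaf-set $X$ must contain every pendant edge and must connect all vertices of $C$, so $T$ is forced to be $C$ with a single edge deleted together with all pendant edges; in particular $T$ is a spanning tree of $N$ on $X$. Since $N$ has no vertex of degree at least $4$, the extra condition in part~(3) of Definition~\ref{SpanDef} is vacuous, so $T$ witnesses that every embedded tree is a strict base tree.

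\emph{For the ``only if'' direction}, suppose $N$ is simple and strictfully tree-based. If $N$ is a tree, simplicity makes every edge pendant, so $N$ is a star tree. Otherwise $N$ has a blob, and simplicity forces every non-leaf vertex of $N$ to lie in a blob with every leaf attached by a pendant edge. First we show $N$ is binary: if some vertex $w$ has degree at least $4$, then $w$ lies in a blob $B$ and hence meets a non-cut edge $e$; by Remark~\ref{SubEmbed}, the connected graph $N-e$ (still with leaf-set $X$) contains an embedded tree $T$ with leaf-set $X$ omitting $e$. If $T$ were a strict base tree, then by Definition~\ref{SpanDef}(3) and Theorem~\ref{StrictBaseSpan} there is a spanning tree $T^{+}$ of $N$ on $X$ containing every edge at $w$ and suppressing to $T$; but a vertex of degree at least $3$ is never suppressed, so $w$ occurs in $T$ incident with all of its $N$-edges, including $e$, a contradiction. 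Hence $N$ is binary, and from now on $N$ has a unique blob $B$ and ``strict base tree'' coincides with ``base tree'', i.e.\ with ``suppression of a spanning tree of $N$ on $X$''. It remains to show $B$ is a cycle. If not, $B$ has a vertex $w$ of degree $3$ in $B$ (and no pendant), and using the $2$-edge-connectivity of $B$ one threads a subtree of $N$ through all of $B$ except a single interior vertex $v$, chosen so that every neighbour of $v$ already has degree $2$ in the subtree; the resulting embedded tree $T$ on $X$ then cannot be the suppression of any spanning tree of $N$ on $X$ (re-attaching $v$ would either close a cycle or create a degree-$1$ non-leaf), contradicting strictful tree-basedness. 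Thus $B$ is a cycle and $N$ is a binary level-$1$ network.

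The main obstacle is this final step: exhibiting the ``bad'' embedded tree inside a binary, level-$\geq 2$ blob and verifying that it is not the suppression of any spanning tree of $N$ on $X$. A workable route is to suppress the degree-$2$ vertices of $B$ to obtain a $2$-edge-connected cubic multigraph $H$ that is more than a pair of parallel edges, pick two branch vertices joined by three internally disjoint paths of $H$, and then choose the omitted interior vertex $v$ and the routing so that the neighbours of $v$ stay saturated; the care needed lies in handling the pendant edges and in the case analysis on where the degree-$2$ vertices of a hypothetical $T^{+}$ could sit. A handful of small blobs are then checked by inspection.
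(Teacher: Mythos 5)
Your reduction to the binary case is exactly the paper's argument: pick a vertex $w$ of degree at least $4$, use simplicity to find a non-pendant (hence non-cut) edge $e$ at $w$, apply Remark \ref{SubEmbed} to the connected graph $N-e$ to get an embedded tree on $X$ omitting $e$, and conclude that this tree cannot be a strict base tree. Your direct verification of the ``if'' direction for binary level-$1$ networks is also sound (the paper gets it for free from the citation below). The problem is the last step.

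Having established that $N$ is binary, you still owe the claim that a simple binary strictfully tree-based network is level-$1$, and here your write-up is only a plan, not a proof: you say ``one threads a subtree of $N$ through all of $B$ except a single interior vertex,'' then explicitly concede that ``the main obstacle is this final step,'' propose ``a workable route,'' and defer ``a handful of small blobs'' to inspection. Nothing is actually constructed or verified, so as written this is a genuine gap. The paper closes it in one line: once $N$ is binary, strictfully tree-based coincides with fully tree-based, and Theorem 5 of Francis, Huber and Moulton (\cite{Francis2017}) says a binary unrooted network is (fully) tree-based if and only if it is level-$1$. If you want to avoid the citation, there is a much shorter direct argument than the one you sketch, and it works in the same spanning-tree reading the paper uses throughout (``$T$ must be a spanning tree''): if the unique blob $B$ is not a cycle, it contains a vertex $v$ of degree $3$ in $N$ with no pendant edge; a binary blob has no cut-vertices (a degree-$3$ cut-vertex in a bridgeless graph is impossible), so $N-v$ is connected with leaf-set $X$, and Remark \ref{SubEmbed} yields an embedded tree on $X$ missing $v$, which is not a spanning tree. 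Your extra worry about whether such a tree could still be the \emph{suppression} of some other spanning tree is a definitional reading the paper does not adopt, and it is what makes your final step unnecessarily hard.
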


\begin{proof}
Star trees are obviously strictfully tree-based, and any tree that is not a star tree is not simple. We therefore assume that $N$ is a strictfully tree-based level-$k$ network for $k > 0$. Suppose $N$ is strictly nonbinary, so there exists some non-leaf vertex $v$ of degree $4$ or more. Then $v$ have at least two non-pendant edges as $N$ is simple. Label one of them $e$, and observe that $N-e$ is a connected graph with leaf-set $X$. By remark \ref{SubEmbed}, $N-e$ contains a subtree $T$ on leaf-set $X$. As $N$ is strictfully tree-based, $T$ must be a spanning tree. But $T$ is then a spanning tree that does not include all edges incident to vertices of degree $4$ or more, so $N$ is not strictfully tree-based.

Therefore all simple strictfully tree-based networks are binary networks. In the binary case, the definition of strictfully tree-based coincides with the definition of tree-based, and Theorem 5 from \cite{Francis2017} states that binary phylogenetic networks are tree-based if and only if they are level-$1$. The theorem follows.
\end{proof}

Recall that a vertex $v$ in a network $N$ is a cut-vertex if deletion of $v$ and all edges incident to $v$ from $N$ forms a disconnected graph.

\begin{lemma}
\label{CutInSub}
Let $N$ be a simple loosefully tree-based network on $X$ with a cut-vertex $v$. Then any embedded tree in $N$ on $X$ contains $v$.
\end{lemma}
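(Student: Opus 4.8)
The plan is a short argument by contradiction. Suppose $T$ is an embedded tree in $N$ with leaf-set $X$ that does not contain $v$. Since $v$ is a cut-vertex, the graph obtained from $N$ by deleting $v$ together with all edges incident with it is disconnected; write $C_1,\dots,C_m$ (with $m\ge 2$) for its cut-components, each of which is non-empty because $X\subseteq V(N)\setminus\{v\}$ is non-empty. As $T$ avoids $v$, it is a subgraph of this disconnected graph, and since $T$ is connected it must be contained in a single cut-component, say $V(T)\subseteq V(C_1)$.

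The next step is to observe that this forces $T$ to be non-spanning. Fix any other cut-component $C_2$ (which exists since $m\ge 2$); it is non-empty and vertex-disjoint from $C_1$, so $V(T)$ misses every vertex of $C_2$ and hence $V(T)\subsetneq V(N)$. Thus $T$ is an embedded tree in $N$ on $X$ that is not a spanning tree of $N$. But $N$ is loosefully tree-based, which by definition means every embedded tree with leaf-set $X$ is a base tree in the loose sense, i.e.\ a spanning tree of $N$ with leaf-set $X$ (cf.\ Definition~\ref{SpanDef}(1)); in particular $T$ would have to be spanning. This contradiction proves that $v$ belongs to every embedded tree of $N$ on $X$.

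The part that needs the most care is pinning down the definitions so that they do the work: one must read ``embedded tree in $N$ on $X$'' as a genuine connected subtree of $N$ (possibly carrying degree $2$ vertices, in the spirit of Remark~\ref{SubEmbed}) whose degree $1$ vertices are exactly $X$, so that exhibiting such a tree which is not spanning really does contradict $N$ being loosefully tree-based. With that reading, the core step is immediate from connectedness together with the definition of a cut-vertex; the simplicity hypothesis does not appear to be needed to drive this step, and seems only to be there to match the setting of the surrounding results and to exclude degenerate cases (for instance $|X|=1$, where $N$ admits no spanning tree with leaf-set $X$ at all and the statement is vacuous).
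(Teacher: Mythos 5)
Your proof is correct for the statement as literally written, but it takes a genuinely different---and, in context, weaker---route than the paper's. You note that an embedded tree $T$ on $X$ avoiding $v$ cannot be spanning and so contradicts the loosefully tree-based hypothesis via Definition~\ref{SpanDef}(1); as your own argument makes plain, neither simplicity nor even the cut-vertex hypothesis is needed for this (once $v\notin V(T)$ you already have $V(T)\subsetneq V(N)$, without ever invoking the components $C_1,\dots,C_m$), so the literal statement holds for \emph{every} vertex of a loosefully tree-based network and is close to a tautology. The paper instead proves two separate facts: first, that every cut-component of $v$ contains a leaf---this is the only place the loosefully tree-based hypothesis enters, via a union-of-shortest-paths construction in the spirit of Remark~\ref{SubEmbed}; second, that if leaves $a$ and $b$ lie in distinct cut-components of $v$, then every $a$--$b$ path, and hence every embedded tree on $X$, must pass through $v$. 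That second fact is independent of loosefully tree-basedness, and this independence is what the paper actually relies on later: Lemma~\ref{CutInSub} is invoked in the ``if'' direction of Theorem~\ref{LoosefullyTreeBased}, applied to a network that is \emph{not yet known} to be loosefully tree-based. Your proof, which leans entirely on that hypothesis, would make the subsequent application circular, whereas the paper's connectivity argument (granted that each cut-component meets $X$) survives in that setting. So: your argument is a valid and more direct proof of the lemma as stated, but it establishes something too weak to play the role the lemma is given in the rest of the paper.
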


\begin{proof}
We claim that every cut-component of $v$ in $N$ contains at least one leaf. If this is true, then suppose $a$ and $b$ are leaves from separate cut-components. Then any path between $a$ and $b$ must pass through $v$, so any tree on $X$ (which must include $a$ and $b$) must include $v$ as well. It remains to be shown that all cut-components contain at least one leaf.

Suppose some cut-component of $v$ in $N$ did not contain a leaf, but contains some vertex $w$. Then if we take any pair of leaves $x$ and $y$, a path between them cannot contain $w$ as the path would need to contain $v$ twice. As the union of paths $p_{xy}$ for all leaves $x,y \in X$ induces a network that contains a tree on the leaf-set $X$ that does not contain $w$, $N$ is not loosefully tree-based. Thus every cut-component formed by deleting $v$ contains a leaf and the result is proven.
\end{proof}

The requirement that all cut-vertices must be in any embedded tree proves useful for both of our remaining classifications. Unfortunately, fully tree-based networks and loosefully tree-based networks are not as uncomplicated as strictfully tree-based networks.

\begin{theorem}
\label{LoosefullyTreeBased}
Let $N$ be a simple network on taxa $X$. Then $N$ is loosefully tree-based if and only if every non-leaf vertex is a cut-vertex.
\end{theorem}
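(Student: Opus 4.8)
The plan is to prove the ``only if'' direction by contraposition and the ``if'' direction directly, the latter resting on a short graph-theoretic claim about cut-components. The ``only if'' direction exhibits a ``bad'' embedded tree via Remark~\ref{SubEmbed}, and the ``if'' direction reduces, once the claim is established, to the path argument already used in the proof of Lemma~\ref{CutInSub}. Throughout I use the fact, implicit in the proof of the strictfully-tree-based theorem above, that an embedded tree of $N$ with leaf-set $X$ is a base tree of $N$ in the loose sense precisely when it is a spanning subtree of $N$.

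For the ``only if'' direction, suppose $N$ has a non-leaf vertex $v$ that is not a cut-vertex; I will show $N$ is not loosefully tree-based. First, $v$ is not incident to a pendant edge, since otherwise deleting $v$ would disconnect the corresponding leaf and make $v$ a cut-vertex; hence $v \notin X$, and $N - v$ is a connected subgraph of $N$ still containing every leaf of $X$ (after pruning any surplus degree-$\le 1$ vertices created by the deletion of $v$, all of which lie outside $X$). By Remark~\ref{SubEmbed}, $N - v$ contains an embedded tree $T$ with leaf-set $X$. Since $v \notin V(T)$, this $T$ is not a spanning subtree of $N$, so it is not a base tree of $N$, and therefore $N$ is not loosefully tree-based.

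For the ``if'' direction, assume every non-leaf vertex of $N$ is a cut-vertex, and let $T$ be any embedded tree of $N$ with leaf-set $X$; it suffices to show $V(T) = V(N)$, since then $T$ is a spanning subtree with leaf-set $X$ and hence a base tree, and $T$ was arbitrary. I claim first that every cut-component of every cut-vertex of $N$ contains a leaf. Suppose not: some cut-vertex $v$ has a cut-component $C$ with no leaf, so every vertex of $C$ is a non-leaf, hence a cut-vertex of $N$. Let $H$ be the subgraph of $N$ induced on $C \cup \{v\}$; it is connected, on at least two vertices. Since $C$ is a connected component of $N - v$, every simple path of $N$ with both endpoints in $C \cup \{v\}$ stays inside $H$, and the remainder of $N$ meets $H$ only at $v$; it follows that for $u \in C$ the graph $N - u$ is connected if and only if $H - u$ is, that is, $u$ is a cut-vertex of $N$ exactly when it is a cut-vertex of $H$. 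But a connected graph on at least two vertices has at least two non-cut-vertices (for instance the leaves of any spanning tree of it), and as $v$ is the only vertex of $H$ outside $C$, at least one such non-cut-vertex $u$ of $H$ lies in $C$. Then $u$ is a non-leaf of $N$ that is not a cut-vertex, a contradiction, proving the claim. Granting it, the rest is exactly the argument in the proof of Lemma~\ref{CutInSub}: for any non-leaf vertex $v$, choose leaves $a$ and $b$ in two distinct cut-components of $v$; every path in $N$ between $a$ and $b$ passes through $v$, and since $T$ is connected and contains $a$ and $b$ it contains such a path, so $v \in V(T)$. Hence $V(T) = V(N)$.

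I expect the main obstacle to be the cut-component claim: the ``only if'' direction and the final step of the ``if'' direction are routine once Remark~\ref{SubEmbed} and the reasoning of Lemma~\ref{CutInSub} are in hand, whereas ruling out leaf-free cut-components from the hypothesis requires the passage to the induced subgraph $H$ together with the (slightly fiddly) verification that deleting a cut-vertex does not change which vertices of $C$ are cut-vertices. Minor points to dispatch along the way are the degenerate case $|X| \le 1$ (covered by the remark at the end of Section~3) and, if multi-edges are permitted, the pruning step in the ``only if'' direction.
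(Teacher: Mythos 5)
Your proof is correct, and both directions are in substance the arguments the paper uses: the ``only if'' direction deletes the non-cut non-leaf vertex $v$ and applies Remark~\ref{SubEmbed} to $N-v$ to produce a non-spanning embedded tree, exactly as in the paper. Where you genuinely depart from (and improve on) the paper is in the ``if'' direction. The paper simply cites Lemma~\ref{CutInSub} to conclude that every embedded tree on $X$ contains every cut-vertex; but that lemma carries the hypothesis that $N$ is \emph{already} loosefully tree-based, which is precisely what the ``if'' direction is trying to establish, so the citation is circular as written. The only place the lemma's proof uses loosefully-tree-basedness is to show that every cut-component of a cut-vertex contains a leaf, and your contribution is to derive that same fact instead from the hypothesis that every non-leaf vertex is a cut-vertex: a leaf-free cut-component $C$ of $v$ would consist entirely of cut-vertices of $N$, yet the induced connected subgraph on $C\cup\{v\}$, having at least two vertices, has at least two non-cut-vertices (e.g.\ leaves of a spanning tree of it), at least one of which lies in $C$ and is then a non-cut-vertex of $N$ --- a contradiction. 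Your verification that cut-vertices of $N$ lying in $C$ coincide with cut-vertices of the induced subgraph is the slightly fiddly but necessary step that makes this work. Once the claim is in hand, the two-leaves-separated-by-$v$ argument is the same as the paper's. In short: same overall architecture, but your version closes a real logical gap in the published proof of the ``if'' direction.
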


\begin{proof}
The level-$0$ case is trivial, so we assume $N$ is level-$k$ for $k \ge 1$.

Suppose all nonleaf vertices of $N$ are cut-vertices. Then any subtree of $N$ on $X$ must contain all vertices of $N$ by Lemma \ref{CutInSub}. It follows that any subtree of $N$ on $X$ must also be a spanning tree of $N$ on $X$.

Now suppose that $N$ is loosefully tree-based. Seeking a contradiction, let $v$ be a non-leaf vertex that is not a cut-vertex. Then $N-v$ is a connected network on leaf-set $X$ and hence there exists a subtree of $N-v$ on leaf-set $X$. This implies that $N$ contains a subtree $T$ on leaf-set $X$ that does not contain $v$, and hence $T$ is not a spanning tree. It follows that $N$ is not loosefully tree-based, a contradiction. Therefore all vertices of $N$ are cut-vertices.
\end{proof}

We note that this makes it fairly trivial to find loosefully tree-based networks of level-$k$ for any $k \ge 0$. It suffices to find a loosely tree-based level-$k$ network $N$, then add leaves to each vertex in $N$ that does not already have a pendant edge.

Finally, we classify fully tree-based networks.

\begin{theorem}
Let $N$ be a simple network on taxa $X$. Then $N$ is fully tree-based if and only if 

\begin{enumerate}
\item every non-leaf vertex in $N$ either is degree $3$ with a pendant edge or is a cut-vertex with at least $3$ cut-components, and
\item every vertex of degree $4$ or more in $N$ is only adjacent to vertices of degree $3$ or $1$.
\end{enumerate}
\end{theorem}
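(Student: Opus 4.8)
The plan is to run both implications through material already in the excerpt: the characterisation of loosefully tree-based networks (Theorem~\ref{LoosefullyTreeBased}), the fact that an embedded tree passes through every cut-vertex of a loosefully tree-based network (Lemma~\ref{CutInSub}), and the tree-extraction principle (Remark~\ref{SubEmbed}). Throughout I will use that, in the fully tree-based setting, an embedded tree $T$ on $X$ is a base tree in the usual sense precisely when $T$ spans $N$, $T$ contains every edge of $N$ joining two vertices of degree $\ge 4$, and every degree-$2$ vertex of $T$ had degree $3$ in $N$ (Definition~\ref{SpanDef}(2), with Theorem~\ref{TreeBaseSpan}).

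For the forward direction I would first observe that a usual-sense base tree is in particular a loose-sense one, so $N$ fully tree-based forces $N$ loosefully tree-based; by Theorem~\ref{LoosefullyTreeBased} every non-leaf vertex of $N$ is then a cut-vertex, and by Lemma~\ref{CutInSub} every embedded tree on $X$ contains every non-leaf vertex, hence spans $N$. Condition~(2) then follows by a single edge deletion: if $u,v$ are adjacent with $\deg(u),\deg(v)\ge 4$, then $uv$ is non-pendant, hence (as $N$ is simple) not a cut-edge, so $N-uv$ is connected with leaf-set $X$, and by Remark~\ref{SubEmbed} it contains an embedded tree on $X$ missing the edge $uv$ between two degree-$\ge4$ vertices --- contradicting full tree-basedness. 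For condition~(1) it then remains, given that every non-leaf is a cut-vertex, to rule out a cut-vertex of degree $\ge 4$ with exactly two cut-components, and to note that a degree-$3$ cut-vertex with exactly two cut-components automatically has a pendant edge (the cut-component joined to it by a single edge is joined by a cut-edge, which is pendant since $N$ is simple).

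The crux is eliminating a cut-vertex $v$ with exactly two cut-components $C_1,C_2$ and $\deg_N(v)=k\ge 4$. Writing $a_i\ge 1$ for the number of edges of $N$ from $v$ into $C_i$ (so $a_1+a_2=k$), I would delete $a_i-1$ of these edges for each $i$, obtaining a subgraph $N'$ with $\deg_{N'}(v)=2$. Since each $N[C_i]$ is connected and at least one $v$--$C_i$ edge survives, $N'$ is connected; and the only vertices that lose an edge are $v$ and vertices that had degree $\ge 3$ in $N$ (a leaf incident with $v$ would form a one-edge bundle, $a_i=1$, which is not thinned), so $N'$ still has leaf-set $X$. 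By Remark~\ref{SubEmbed}, $N'$ contains an embedded tree $T$ on $X$; as $T$ is an embedded tree of $N$ it spans $N$ by the previous paragraph, so $v\in T$ with $2\le\deg_T(v)\le\deg_{N'}(v)=2$, i.e.\ $\deg_T(v)=2$ while $\deg_N(v)=k\ge4\ne3$. Hence $T$ is not a usual-sense base tree, so $N$ is not fully tree-based. I expect the bookkeeping in this step --- verifying that $N'$ remains connected with the correct leaf-set after thinning --- to be the only part requiring genuine care.

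For the converse, assume (1) and (2) and let $T$ be any embedded tree on $X$. By (1) every non-leaf vertex is a cut-vertex, so $N$ is loosefully tree-based by Theorem~\ref{LoosefullyTreeBased} and $T$ spans $N$ by Lemma~\ref{CutInSub}. By (2) there is no edge of $N$ between two degree-$\ge4$ vertices, so that requirement on $T$ is vacuous. Finally, suppose $\deg_T(v)=2$; then $\deg_N(v)\notin\{1,2\}$, and if $\deg_N(v)\ge4$ then by (1) $v$ is a cut-vertex with at least three cut-components. Since $T$ spans $N$, is connected, and no path of $N$ (hence of $T$) joins distinct cut-components of $v$ without passing through $v$, the vertices of distinct cut-components of $v$ lie in distinct components of $T-v$, so $\deg_T(v)\ge3$ --- a contradiction. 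Therefore $\deg_N(v)=3$, $T$ meets all three requirements of a usual-sense base tree, and $N$ is fully tree-based.
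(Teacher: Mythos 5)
Your proposal is correct and follows essentially the same route as the paper's proof: condition (2) via deleting the edge between two degree-$\ge 4$ vertices and extracting an embedded tree from the still-connected remainder, condition (1) via Theorem~\ref{LoosefullyTreeBased} plus thinning the edge bundles at a two-cut-component cut-vertex down to one edge per component to force a spanning tree in which that vertex has degree $2$, and the converse via Lemma~\ref{CutInSub} together with the cut-component count bounding the degree in any spanning tree. The only cosmetic difference is that you spell out (via the cut-edge/simplicity argument) why a degree-$3$ cut-vertex with two cut-components must carry a pendant edge, which the paper leaves implicit.
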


\begin{proof}
The level-$0$ case is trivially true, so we assume that $N$ is level-$k$ for $k \ge 1$.

Suppose the vertices obey the conditions outlined in the statement. Let $T$ be a subtree of $N$. We again can see that $T$ must be a spanning tree, as all non-leaf vertices of $N$ are cut-vertices. Suppose $v$ is a vertex in $N$. We note that $v$ must be adjacent in $T$ to at least one vertex from each of its cut components, or otherwise $T$ does not span $N$. Hence if $v$ has at least $3$ cut-components it must have degree at least $3$ in $T$. Otherwise $v$ has a pendant edge and one other cut-component. As $v$ can be adjacent to at most two vertices in its non-pendant cut-component, $v$ can have degree at most $3$. It follows that all vertices of degree $2$ in $T$ were degree $3$ in $N$. 

Finally, as all vertices of degree $4$ or more are only adjacent to vertices of degree $3$ or $1$, all spanning trees contain all edges of $N$ between two adjacent vertices of degree $4$ or more (as there are none). It follows that $N$ is fully tree-based.

Now suppose $N$ is fully tree-based. We will show that neither property $1$ nor $2$ from the statement of the theorem are possible. First suppose $2$ holds, that is, let $N$ contain a pair of adjacent vertices of degree $4$ or more and denote their shared edge $e$. We see that $N-e$ must be a connected subgraph (or $e$ would be a cut-edge), so by remark \ref{SubEmbed} $N-e$ contains a subtree on leaf-set $X$. Thus $N$ contains a subtree $T$ on leaf-set $X$ that does not include $e$, and since $N$ is fully tree-based, $T$ is a spanning tree. Hence there is a spanning tree of $N$ that does not use $e$ (an edge between two vertices of degree $4$ or more), so $N$ is not fully tree-based, a contradiction. We therefore assume all vertices of degree $4$ or more in $N$ are only adjacent to vertices of degree $3$ or $1$. 

We now note that being fully tree-based is a stronger condition than being loosefully tree-based, so by Theorem \ref{LoosefullyTreeBased} every vertex of $N$ must be a cut-vertex. 

Suppose that $v$ is a cut-vertex with precisely $2$ cut-components. We claim that $v$ must have a pendant edge, thus meeting the requirements in the statement of the theorem. Let the set of vertices adjacent to $v$ in one cut-component be $A_1=\{a_1,...,a_s\}$ and let the vertices adjacent to $v$ in the other be $A_2=\{b_1,...,b_t\}$. We can assume that either $A_1$ or $A_2$ contains more than one vertex, as $v$ cannot have degree $2$. Therefore suppose $|A_1| > 1$. As $A_1$ is in a single cut component, there must exist a path between $a_1$ and each of $a_2,...,a_s$ that does not contain $v$. Similarly, if $|A_2| > 1$,  there must exist a path between $b_1$ and each of $b_2,...,b_t$ that does not contain $v$.

Denote the edge between $v$ and $a_i$ by $e_i$, and the edge between $v$ and $b_i$ by $f_i$. Additionally, let $C=\{e_2,...,e_s, b_2,...,b_t\}$, interpreted as just $\{e_2,...,e_s \}$ if $|A_2|=1$. Then $N-C$ is a connected subgraph with leaf-set $X$, so $N-C$ contains a subtree on leaf-set $X$. This means $N$ contains a subtree $T$ on leaf-set $X$ that does not contain $e_2,...,e_s,b_2,...,b_t$, so $v$ is degree $2$ in $T$, which is in fact a spanning tree as $N$ is fully tree-based.

We note that if $|A_i| > 1$ for both $i=1,2$, or if $|A_i| > 2$ for either $i=1$ or $i=2$, $v$ has degree $4$ or more. In either case we obtain a spanning tree for which $v$ has degree $4$ or more in $N$ but $2$ in the spanning tree, so $N$ is not fully tree-based by definition. Hence $|A_1|=2$ and $|A_1|=1$, so if $v$ is a cut-vertex with $2$ cut-components, $v$ must be degree $3$ and have a pendant edge.

Hence $N$ meets the conditions outlined in the statement.
\end{proof}

It is still rather easy to construct a level-$k$ fully tree-based network for any $k \ge 0$. Figure \ref{TBExamples} (iii) provides an example of a fully tree-based level-$1$ network (that is also strictly tree-based, but not strictfully tree-based). 

Consider Figure \ref{KFully}, which illustrates examples for $k=2,3,4$. Every white vertex in the Figure indicates the vertex has an omitted pendant edge, and every black vertex does not. Then we can see that every cut-vertex without a pendant edge has at least $3$ cut-components (in particular note that the central vertex in (i) has an omitted pendant edge), every other vertex is of degree $3$ and no vertex of degree $4$ or more is adjacent to another one. Constructing level-$k$ fully tree-based graphs is a simple matter of adding additional `diamond formations' around the central cut-vertices.

\begin{figure}[H]
\centering
\begin{tikzpicture}[scale=0.85]
\draw (0,1)--(1,1.5)--(2,1)--(1,0.5)--(0,1);
\draw (2,1)--(3,1.5)--(4,1)--(3,0.5)--(2,1);

\draw[draw=black, fill=white] (2,1) circle [radius=1.5pt];
\draw[draw=black, fill=white] (0,1) circle [radius=1.5pt];
\draw[draw=black, fill=white] (1,1.5) circle [radius=1.5pt];
\draw[draw=black, fill=white] (1,0.5) circle [radius=1.5pt];
\draw[draw=black, fill=white] (4,1) circle [radius=1.5pt];
\draw[draw=black, fill=white] (3,1.5) circle [radius=1.5pt];
\draw[draw=black, fill=white] (3,0.5) circle [radius=1.5pt];
\node at (2,-2) {$\mathstrut (i)$};
\end{tikzpicture}
\hspace{0.5cm}
\begin{tikzpicture}[scale=0.85]
\draw (0,0)--(1,1)--(2,1)--(1,0)--(0,0);
\draw (4,0)--(3,1)--(2,1)--(3,0)--(4,0);
\draw (2,1)--(1.5,2)--(2,3)--(2.5,2)--(2,1);

\draw[fill] (2,1) circle [radius=1.5pt];
\draw[draw=black, fill=white] (0,0) circle [radius=1.5pt];
\draw[draw=black, fill=white] (1,1) circle [radius=1.5pt];
\draw[draw=black, fill=white] (1,0) circle [radius=1.5pt];
\draw[draw=black, fill=white] (4,0) circle [radius=1.5pt];
\draw[draw=black, fill=white] (3,1) circle [radius=1.5pt];
\draw[draw=black, fill=white] (3,0) circle [radius=1.5pt];
\draw[draw=black, fill=white] (1.5,2) circle [radius=1.5pt];
\draw[draw=black, fill=white] (2,3) circle [radius=1.5pt];
\draw[draw=black, fill=white] (2.5,2) circle [radius=1.5pt];

\node at (2,-2) {$\mathstrut (ii)$};
\end{tikzpicture}
\hspace{0.5cm}
\begin{tikzpicture}[scale=0.85]
\draw (2,3)--(1.5,2)--(2,1)--(2.5,2)--(2,3);
\draw (0,1)--(1,1.5)--(2,1)--(1,0.5)--(0,1);
\draw (2,1)--(3,1.5)--(4,1)--(3,0.5)--(2,1);
\draw (2,1)--(1.5,0)--(2,-1)--(2.5,0)--(2,1);

\draw[fill] (2,1) circle [radius=1.5pt];
\draw[draw=black, fill=white] (0,1) circle [radius=1.5pt];
\draw[draw=black, fill=white] (1,1.5) circle [radius=1.5pt];
\draw[draw=black, fill=white] (1,0.5) circle [radius=1.5pt];
\draw[draw=black, fill=white] (4,1) circle [radius=1.5pt];
\draw[draw=black, fill=white] (3,1.5) circle [radius=1.5pt];
\draw[draw=black, fill=white] (3,0.5) circle [radius=1.5pt];
\draw[draw=black, fill=white] (1.5,2) circle [radius=1.5pt];
\draw[draw=black, fill=white] (2,3) circle [radius=1.5pt];
\draw[draw=black, fill=white] (2.5,2) circle [radius=1.5pt];
\draw[draw=black, fill=white] (1.5,0) circle [radius=1.5pt];
\draw[draw=black, fill=white] (2,-1) circle [radius=1.5pt];
\draw[draw=black, fill=white] (2.5,0) circle [radius=1.5pt];

\node at (2,-2) {$\mathstrut (iii)$};
\end{tikzpicture}

\caption{Level-$2,3,$ and $4$ fully tree-based networks. In these diagrams every white vertex indicates the vertex has an omitted pendant edge, and every black vertex does not.}
\label{KFully}
\end{figure}
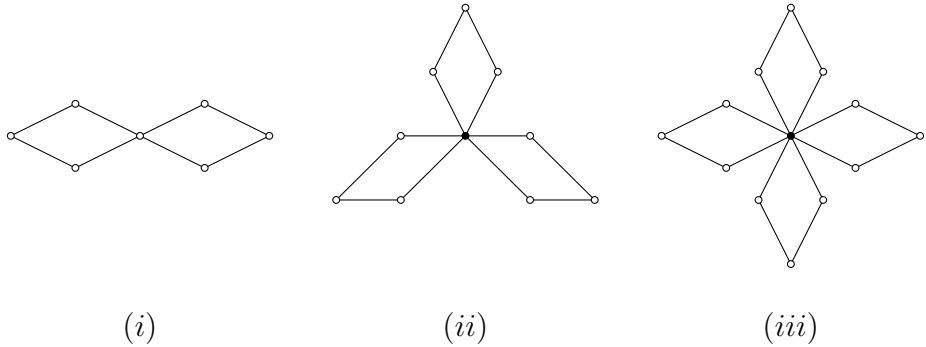

It is also worth noting that, for example, loosefully tree-based networks can also be strictly tree-based without being strictfully tree-based, as Example \ref{CrabEx} illustrates. This is especially pertinent in light of the fact that Figure \ref{TBExamples} (iii) is a strict, fully tree-based network but not a strictfully tree-based network. We further note that Figure \ref{TBExamples} (ii) is an example of a loosefully tree-based network that is tree-based but not strictly tree-based.

\begin{example}
\label{CrabEx}
Figure \ref{CrabGraph} shows an example of a level-$2$ strictly tree-based network $N$ that is loosefully tree-based. To see this, observe that every vertex has at least one pendant edge, so $N$ is loosefully tree-based by Theorem \ref{LoosefullyTreeBased}. A strict base tree may be obtained by deleting edges $A$ and $B$. We note that this example is not fully tree-based, as there exist spanning trees that can be obtained by deleting any two non-pendant edges incident to a single degree $5$ vertex.
\begin{figure}[H]
\centering
\begin{tikzpicture}
\draw (0,1) --(1,2)--(2,3)--(3,3)--(4,2)--(5,3);
\draw (1,2) --(2,1)--(3,1)--(4,2);
\draw (1,2)--(4,2);
\draw (2,1)--(2,0);
\draw (3,1)--(3,0);
\draw (2,3)--(2,4);
\draw (3,3)--(3,4);
\draw (0,3) --(1,2);
\draw (4,2) --(5,1);

\draw[fill] (1,2) circle [radius=1.5pt];
\draw[fill] (2,1) circle [radius=1.5pt];
\draw[fill] (3,3) circle [radius=1.5pt];
\draw[fill] (3,4) circle [radius=1.5pt];
\draw[fill] (3,1) circle [radius=1.5pt];
\draw[fill] (3,0) circle [radius=1.5pt];
\draw[fill] (0,1) circle [radius=1.5pt];
\draw[fill] (0,3) circle [radius=1.5pt];
\draw[fill] (4,2) circle [radius=1.5pt];
\draw[fill] (2,0) circle [radius=1.5pt];
\draw[fill] (2,3) circle [radius=1.5pt];
\draw[fill] (2,4) circle [radius=1.5pt];
\draw[fill] (5,3) circle [radius=1.5pt];
\draw[fill] (5,1) circle [radius=1.5pt];

\node[above=-0.1cm] at (2.5,3) {$\mathstrut A$};
\node[below=0cm] at (2.5,1) {$\mathstrut B$};
\end{tikzpicture}
\caption{A level-$2$ strictly tree-based network that is loosefully tree-based.}
\label{CrabGraph}
\end{figure}
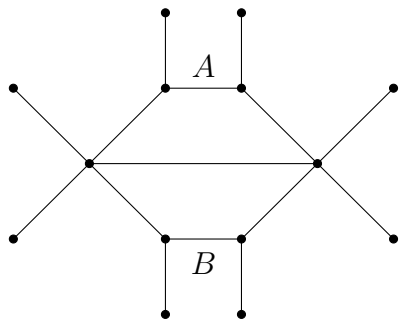
\end{example}

\section{Tree-Based Networks and Colourability}

Let $\chi(G)$ denote the chromatic number of a graph $G$, that is, the minimum number of colours required to colour the vertices of $G$ so that for each edge $(v,w)$, the vertices $v$ and $w$ are coloured differently. In the binary unrooted case, all phylogenetic networks are easily shown to be $3$-colourable as a consequence of Brooks' Theorem, which states that all graphs with maximum degree $d$ are $d$-colourable unless they are complete or an odd cycle \cite{Brooks1941}. However, as we are now examining graphs with no bound on the degree, we require more delicate reasoning to prove results on colourability for tree-based networks.

\begin{theorem}
\label{TBChromatic}
Let $N$ be a network. If $N$ is tree-based, then $N$ is $4$-colourable. However, there exist loosely tree-based networks with chromatic number at least $k$ for any $k > 0$.  
\end{theorem}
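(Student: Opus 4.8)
\section*{Proof proposal for Theorem~\ref{TBChromatic}}

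The plan is to treat the two assertions separately. For the $4$-colourability I would use the construction description of tree-based networks from Theorem~\ref{TreeBaseSpan}: write $N$ as a tree $T_0$ that has been subdivided at a set of attachment points, each of which then receives exactly one extra edge (to another attachment point or to an original vertex of $T_0$) so that every attachment point has degree $3$ in $N$. Let $T^{+}$ denote the subdivided base tree, which is a spanning tree of $N$, and let $F=E(N)\setminus E(T^{+})$ be the added edges. The structural facts I would record are: $T^{+}$ is a tree, hence $2$-colourable; every edge of $F$ has at least one endpoint that is an attachment point; each attachment point lies on exactly one edge of $F$ and has degree $3$ in $N$; and consequently distinct edges of $F$ have distinct attachment-point endpoints.

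Starting from a proper $2$-colouring $c\colon V(N)\to\{1,2\}$ of $T^{+}$, I would call an edge of $F$ a \emph{conflict} if its two endpoints receive the same colour, and for each conflict edge $e$ pick a designated attachment-point endpoint $a(e)$; the map $e\mapsto a(e)$ is injective by the facts above, with image $W$. I would then process the vertices of $W$ in an arbitrary order, recolouring each $w$ with a colour in $\{1,2,3,4\}$ different from the current colours of all its neighbours in $N$; this is possible because $\deg_N(w)=3$, so at most three colours are forbidden. Leaving every colour outside $W$ unchanged, I would check that the resulting $4$-colouring $c'$ is proper. Edges of $T^{+}$ and non-conflict edges of $F$ having no endpoint in $W$ remain properly coloured since $c'=c$ on them; and every remaining edge has an endpoint in $W$ --- this includes all conflict edges, since $a(e)$ is an endpoint of $e$ --- and for such an edge the endpoint in $W$ that was processed later explicitly avoided the colour of the other endpoint. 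A small but essential point to verify is that no edge has both endpoints in $W$: an attachment point lies on a unique $F$-edge, so its designated partner, if it lay in $W$, would have to be the vertex itself; this makes the ``later vertex avoids the earlier one'' bookkeeping unambiguous.

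For the second assertion I would exhibit an explicit family. For $n\ge 3$, let $N_n$ be $K_n$ with one pendant leaf attached to each of its $n$ vertices. Then $N_n$ is connected, has no vertex of degree $2$ (original vertices have degree $n\ge 3$, leaves degree $1$), and its set of leaves is exactly the $n$ pendant leaves, so $N_n$ is an unrooted nonbinary phylogenetic network on a leaf-set $X$ with $|X|=n$. Since $N_n\supseteq K_n$ we have $\chi(N_n)\ge n$, and colouring $K_n$ with $n$ colours and giving each leaf any colour different from its unique neighbour shows $\chi(N_n)=n$. Finally $N_n$ is loosely tree-based: taking any spanning tree $S$ of $K_n$ together with all $n$ pendant edges gives a spanning tree of $N_n$ in which every original vertex has degree $\deg_S(v)+1\ge 2$ (every vertex of a spanning tree of $K_n$ has degree at least $1$), so the leaves of this spanning tree are precisely $X$. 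Hence for any $k>0$ the network $N_{\max(k,3)}$ is loosely tree-based with chromatic number at least $k$.

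The main obstacle I anticipate lies entirely in the first part, namely ruling out a cascade: showing that recolouring one designated attachment point can never push a previously resolved conflict back into conflict. This is exactly what the injectivity of $e\mapsto a(e)$ and the degree-$3$ bound at attachment points control, so the argument does go through; it is simply the step that must be written out carefully rather than asserted.
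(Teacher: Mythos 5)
Your proof is correct and follows essentially the same route as the paper: a greedy recolouring of the degree-$3$ attachment points starting from a $2$-colouring of the subdivided base tree $T^{+}$ for the first claim, and embedding a $k$-clique into a loosely tree-based network for the second (the paper forms the clique on $k$ attachment points inserted into a tree, you attach pendant leaves to $K_n$; the idea is identical). One small caveat: your side-remark that no edge has both endpoints in $W$ is not quite right --- two attachment points subdividing the same original edge of the base tree can be adjacent in $T^{+}$ and both be designated --- but your sequential ``the endpoint processed later avoids the current colour of the other'' argument already covers that case, so nothing breaks.
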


\begin{proof}
Suppose $N$ is tree-based. Then $N$ is obtained by taking a tree $T$, adding attachment points to form $T^+$, then adding edges between pairs of attachments points or between an attachment point and a vertex that was in $T$. Now, recall that a tree has chromatic number $2$. Colour the vertices of $T^+$ according to any valid $2$-colouring, and suppose we have $4$ colours available. Then insert edges between the attachment points to obtain $N$ (which may not result in a valid colouring). We then consider the colouring for the attachment points. Let the (only) new edge incident to some attachment point $v_1$ be $e$. There are two possibilities - either $e$ was added between $v_1$ and a vertex of the base tree, or $e$ was added between $v_1$ and another attachment point. 

If $e$ was attached between $v_1$ and a vertex of the base tree, then $v_1$ is adjacent to $3$ other vertices, with up to three different colours. As there are four options for colours, there exists one that we can colour $v_1$ and any conflict with $v_1$ disappears.

Suppose $e$ was attached between $v_1$ and another attachment point $v_2$. As $v_1$ is adjacent to at most three different colours, we can pick the fourth colour and any conflict between $v_1$ and another vertex disappears. Similarly, $v_2$ is only adjacent to three vertices, so we can do the same thing (and may have to - note that $v_2$ may be adjacent to another attachment point we have already recoloured). By following this strategy for every attachment point we see that $N$ is $4$-colourable.

We now construct a loosely tree-based network with chromatic number at least $k$ for some $k \ge 0$. The $k=1,2$ cases are trivially true, as we can just take a single vertex and a tree respectively. Otherwise, take any tree $T$ and insert $k$ attachment points. We can then insert an edge between each attachment point and each other attachment point to form a $k$-clique. It follows that the resulting loosely tree-based network $N$ has $\chi(N) \ge k$.
\end{proof}

We can improve this bound for strictly tree-based networks, but require a technical lemma first.

\begin{lemma}
\label{StrictTechLemma}
Let $N$ be a simple loosely tree-based network that is not a tree, and let $P=\{v_1,...,v_k\}$ be the set of vertices with a pendant edge. Then there is no base tree that contains every non-pendant edge incident to a vertex in $P$. Moreover, if $N$ is strictly tree-based there exists a degree $3$ vertex $v$ with one pendant edge and one incident edge that is not in the base tree.
\end{lemma}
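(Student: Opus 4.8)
The statement has two parts. For the first part I need to show that no base tree (in the loose sense, i.e. spanning tree on leaf-set $X$) contains *all* non-pendant edges incident to pendant-edge vertices. For the second part, under the stronger hypothesis that $N$ is strictly tree-based, I need to locate a specific witness: a degree-$3$ vertex with exactly one pendant edge whose other non-pendant edge is missing from the base tree.

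The plan is to argue by counting degrees along a cycle. Since $N$ is simple and not a tree, it is connected with no cut-edges other than pendant ones, so every non-pendant edge lies on a cycle inside a blob. First I would observe the following: if a spanning tree $T$ (with leaf-set $X$) contained every non-pendant edge incident to every vertex of $P$, then consider any blob $B$ of $N$. Each vertex of $B$ that carries a pendant edge would have all of its blob-edges in $T$; but $T$ restricted to $B$ (together with $B$'s pendant edges) is acyclic, so it cannot contain all edges of the cycle structure of $B$. Concretely, pick a cycle $C$ in $N$ all of whose edges are non-pendant. For $T$ to be a tree, at least one edge $e$ of $C$ must lie in $N-T$. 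That edge $e$ has two endpoints; if we could always force both endpoints of the missing edge to be vertices of $P$, we would have our contradiction for part one. The cleaner route, though, is the contrapositive direction actually used later: suppose for contradiction such a $T$ exists. Every vertex of $P$ then has all its blob-edges in $T$, so deleting from $N$ the edges of $N-T$ never lowers the degree of a vertex in $P$ below its blob-degree; in particular no vertex of $P$ becomes a degree-$2$ vertex of $T$ that is suppressed. Then I count: $T$ spans the blob $B$, has $|V(B)|$ vertices and $|V(B)|-1$ edges within $B$, whereas $B$ has at least $|V(B)|$ edges (it contains a cycle), so at least one edge of $B$ is in $N-T$; since by assumption every edge incident to $P$ is retained, this missing edge joins two vertices *not* in $P$. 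That is not yet a contradiction — so the real argument must be that one can *choose* the cycle $C$ to pass through a $P$-vertex, or reroute. The correct and simplest line: within a blob $B$, consider the set of vertices in $P\cap V(B)$; each is a cut-vertex, and the union of shortest paths between leaves (Remark \ref{SubEmbed}) gives an embedded tree — but any embedded tree meeting all pendant edges must visit each $P$-vertex and, being acyclic, must omit an edge at some $P$-vertex whenever that vertex has blob-degree $\ge 3$ and lies on a cycle whose other edges are forced. I expect the honest argument to hinge on the fact that a vertex $v\in P$ with blob-degree $d\ge 2$ has all $d$ blob-neighbours, and if all $d$ of those edges were in $T$ then $v$ would have $T$-degree $d+1\ge 3$, which is fine for $d=2$ but these two neighbours lie in the same blob, so there is a path between them avoiding $v$, creating a cycle in $T$ — contradiction. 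So part one reduces to: some $v\in P$ has two blob-neighbours joined by a path off $v$, and keeping both edges at $v$ plus that path makes a cycle in $T$.

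For the second (strictly tree-based) part, I would use Definition \ref{SpanDef}(3): the base tree $T$ contains *every* edge incident to every vertex of degree $\ge 4$. Combine this with part one: some non-pendant edge $e=(u,w)$ incident to a $P$-vertex, say $u\in P$, is not in $T$. By strict tree-basedness neither $u$ nor $w$ can have degree $\ge 4$, so both have degree $3$. Thus $u$ is a degree-$3$ vertex; since $u\in P$ it has one pendant edge, and the non-pendant edge $e$ at $u$ is not in $T$. That $u$ is the desired vertex. The one subtlety is ensuring the missing edge can be taken incident to a *degree-$3$* $P$-vertex rather than only to higher-degree $P$-vertices; but this is exactly what strict tree-basedness buys us — any missing edge has both endpoints of degree exactly $3$ — so as long as part one produces a missing edge incident to $P$ at all, its $P$-endpoint is automatically degree $3$.

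The main obstacle is getting part one watertight: I must show the missing edge of the blob can be taken to be *incident to $P$*, not merely somewhere in the blob. The key fact to exploit is that in a simple network every non-pendant-edge endpoint lies on a cycle within its blob, together with the observation that retaining all blob-edges at a vertex $v\in P$ with blob-degree $\ge 2$ forces a $T$-cycle through $v$ (two neighbours of $v$ in the same 2-edge-connected blob are joined by a $v$-avoiding path). So the careful step is the case analysis on blob-degrees of $P$-vertices and verifying that at least one such vertex has blob-degree $\ge 2$, which holds because a simple non-tree network has a blob, and a blob that is not a single cut-vertex with only pendant attachments must contain a $P$-vertex of blob-degree $\ge 2$ or else is edgeless — contradicting that it is a blob.
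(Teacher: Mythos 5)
The second half of your argument is fine and coincides with the paper's: once some non-pendant edge incident to a vertex of $P$ is known to be missing from a strict base tree $T$, the requirement that $T$ contain every edge at a degree-$\ge 4$ vertex forces that $P$-endpoint to have degree $3$, hence (by simplicity) exactly one pendant edge and one omitted edge. The genuine gap is in part one, and you half-concede it yourself. Your proposed contradiction is that retaining both non-pendant edges at some $v\in P$ ``creates a cycle in $T$'' because the two neighbours of $v$ are joined by a $v$-avoiding path. But that path lives in $N$, not in $T$: a spanning tree omits at least one edge of every cycle of $N$, so the existence in $N$ of a path between two $T$-neighbours of $v$ contradicts nothing about $T$ being acyclic. (A secondary issue: a blob is only guaranteed to have no cut-edges, so $v$ may be a cut-vertex of its blob and a given pair of its neighbours need not be joined by any $v$-avoiding path at all.) You correctly diagnose that the whole difficulty is forcing the omitted edge to be incident to $P$, but no step in your proposal actually achieves this, so part one is not proved and part two has nothing to stand on.

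The paper closes part one with a global counting argument rather than a local cycle argument, and this is the idea your proposal is missing. One first deletes surplus leaves so that each vertex of $P$ carries exactly one pendant edge, making the trimmed network have exactly $k=|P|$ leaves; if a base tree retained every non-pendant edge at every vertex of $P$, then (since $N$ is simple, so each such vertex has at least two non-pendant edges) every vertex of $P$ would have degree at least $3$ in the tree, giving at least $k$ pendant edges plus at least $\frac{2k}{2}=k$ distinct non-pendant edges incident to $P$, i.e.\ at least $2k$ tree edges meeting $P$. Comparing this with the maximum number of edges such a tree on $k$ leaves can carry yields the contradiction. Some such degree-sum or edge-count comparison (rather than an appeal to a cycle appearing inside $T$) is what you need to make part one watertight.
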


\begin{proof}
Suppose $N$ is a simple loosely tree-based network that is not a tree. Observe that every non-leaf vertex in $N$ must have at least $2$ non-pendant edges, as $N$ is simple. 

Let $N-S$ be a network obtained by deleting some set $S$ of leaves (and their incident edges) so that there is exactly one pendant edge incident to each vertex in $P$ and thus precisely $k$ leaves. Observe that $N-S$ is loosely tree-based with base tree $T-S$ if and only if $N$ is loosely tree-based with base tree $T$. 

If $T-S$ contains every non-pendant edge incident to the vertices in $P$, every vertex in $P$ has degree at least $3$ in $T$, as it must have at least $2$ non-pendant edges and $1$ pendant edge. If we consider just those edges in $T$ incident to vertices in $P$, there are $k$ pendant edges. There are $2k$ non-pendant edges, some of which may be double-counted. As each one can be counted no more than twice, there are at least $\frac{2k}{2}=k$ non-pendant edges incident to the vertices in $P$. Summing the pendant and non-pendant edges, we have $2k$ total edges incident to vertices in $P$. However, $T-S$ is a phylogenetic tree on $k$ leaves, and thus can contain at most $2k-3$ edges, which is a contradiction. It follows that $T-S$ is not a base tree of $N-S$, so $T$ is not a base tree of $N$. Hence $T$ cannot contain all non-pendant edges incident to all vertices in $P$. 

Now, suppose $N$ is strictly tree-based with strict base tree $T$. Then at least one vertex in $P$ has smaller degree in $T$ than it does in $N$, by the first half of the Lemma. However, we know that $T$ must contain all edges incident to vertices of degree $4$ or more (by Definition \ref{SpanDef}), which means that $P$ contains a degree $3$ vertex, with degree $2$ in the base tree. The lemma follows.
\end{proof}

We now demonstrate that strictly tree-based networks are $3$-colourable. This proof contains an induction, and Lemma \ref{StrictTechLemma} is critical for the inductive step.

\begin{theorem}
If $N$ is a strictly tree-based network, then $N$ is $3$-colourable. Moreover there exist strictly tree-based networks of chromatic number $3$.
\end{theorem}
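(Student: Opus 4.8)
The strategy is to establish the upper bound by induction after reducing to simple networks, and to settle the sharpness claim with an explicit small example. For the \emph{moreover}: let $T$ be the star tree with centre $r$ and leaves $x_1,x_2,x_3$; subdivide $(r,x_1)$ and $(r,x_2)$ with attachment points $a,b$ and add the edge $(a,b)$. By Theorem~\ref{StrictBaseSpan} the resulting network $N$ is strictly tree-based, while the triangle $r,a,b$ forces $\chi(N)\ge 3$; combined with the first assertion of the theorem this gives $\chi(N)=3$.

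For \emph{strictly tree-based $\Rightarrow$ $3$-colourable}: by the blob-decomposition Proposition, $N$ is strictly tree-based if and only if $B_N$ is strictly tree-based for every blob $B$, and since distinct blobs meet the rest of $N$ only along cut-edges and cut-vertices, proper $3$-colourings of the $B_N$ (and $2$-colourings of the intervening tree parts) can be permuted to agree and then glued, so $\chi(N)=\max\bigl(2,\max_B\chi(B_N)\bigr)$. Hence it suffices to $3$-colour a \emph{simple} strictly tree-based $N$. A simple phylogenetic tree is a star tree and is $2$-colourable, so assume $N$ is not a tree and induct on the level $k\ge 1$ (equivalently, by Theorem~\ref{StrictBaseSpan}, on the number of added edges). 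By Lemma~\ref{StrictTechLemma} there is a degree-$3$ vertex $v$ whose three neighbours are a leaf $\ell$ (via a pendant edge), a vertex $w$ (via a base-tree edge), and a vertex $u$ via an added edge $e=(v,u)$; as $e$ is an added edge of a strict base tree, $u$ is an attachment point, hence has degree $3$ with exactly two base-tree neighbours $u_1,u_2$ and no other added edge. Form $N'$ by deleting $e$ and suppressing the two resulting degree-$2$ vertices $v$ and $u$ (which creates the edges $(\ell,w)$ and $(u_1,u_2)$); deleting $e$ from a strict base tree $T^+$ of $N$ and suppressing $v,u$ there yields a strict base tree of $N'$, so $N'$ is strictly tree-based on $X$ with level $k-1$, and by induction has a proper $3$-colouring $c'$. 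Keep $c'$ on $V(N)\setminus\{v,u\}$; the only edges of $N$ not yet satisfied are $(v,\ell),(v,w),(v,u),(u,u_1),(u,u_2)$. Since $c'(\ell)\ne c'(w)$ and $c'(u_1)\ne c'(u_2)$, there are unique colours $\alpha\notin\{c'(\ell),c'(w)\}$ and $\beta\notin\{c'(u_1),c'(u_2)\}$. If $\alpha\ne\beta$, set $c(v)=\alpha$ and $c(u)=\beta$. If $\alpha=\beta$, first recolour $\ell$ (whose only neighbour in $N$ is $v$) to $\alpha$, then set $c(u)=\beta$ and $c(v)$ to the colour outside $\{\alpha,c'(w)\}$. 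In either case $c$ is a proper $3$-colouring of $N$, completing the induction.

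The step I expect to be delicate is verifying, in the inductive step, that $N'$ really is a well-defined network with smaller level: suppressing $v$ is always safe because its neighbour $\ell$ is a leaf, so $(\ell,w)$ is genuinely a new edge, but suppressing $u$ creates a parallel edge precisely when $u,u_1,u_2$ already form a triangle in $N$ (which forces $(u_1,u_2)$ to have been an added edge). I would treat this degenerate configuration separately --- deleting $u$ without adding a parallel edge, reclassifying $(u_1,u_2)$ as a base-tree edge, and, if necessary, re-invoking the blob decomposition on $N'$ so that the induction still runs on the total level, which strictly drops --- and similarly check that no suppression destroys the strict-base-tree property. Once the reduction is set up correctly, the degree bookkeeping and the colour extension above are routine, and Lemma~\ref{StrictTechLemma} is exactly what drives the induction by guaranteeing an attachment point carrying a pendant edge that can be peeled off.
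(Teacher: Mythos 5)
Your proposal is correct and follows essentially the same route as the paper: the same triangle-on-a-star example for sharpness, the same reduction to simple networks via the blob decomposition, and the same induction on level in which Lemma~\ref{StrictTechLemma} supplies a degree-$3$ vertex with a pendant edge and an added edge to peel off, with the leaf recoloured to free up a colour for that vertex. If anything you are more careful than the paper, which silently ignores the degenerate case where suppressing the far endpoint of the deleted edge would create a parallel edge; your proposed handling of that case is a reasonable patch.
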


\begin{proof}
We first find a strictly tree-based network of chromatic number $3$. Take the star tree with $3$ leaves, then place attachment points on two of the edges and add an edge between them. The resulting network has chromatic number $3$, as it contains a $3$-clique and it is trivial to find a $3$-colouring.

Suppose $N$ is a strictly tree-based network. We observe that if a graph $G$ is $3$-colourable, the graph $G'$ obtained from $G$ by subdividing an edge $e=(v,w)$ to form $e_1=(v,x), e_2=(x,w)$ is also $3$-colourable. This can be seen by applying the $3$-colouring of $G$ to the corresponding vertices of $G'$, and then observing that $v$ and $w$ are two different colours. We can then set $x$ to be the third colour.

Secondly, observe that it suffices to consider simple networks, as $N$ will be $3$-colourable if and only if every blob $B$ of $N$ is $3$-colourable.

We now proceed by induction on the level of $N$. Suppose $N$ is level-$1$. By definition, this means that we can take a base tree $T$ of $N$, then subdivide two edges of $T$ to form $T^+$, then add an edge between the two attachment points to form $N$. Thus, suppose we have formed $T^+$. As $T^+$ is a tree, it is $2$-colourable, so set a valid $2$-colouring for $T^+$. We then add the edge between attachment points $a$ and $b$. If $a$ and $b$ are different colours, $T$ still has a valid $2$-colouring. Otherwise, if $a$ and $b$ are the same colour, we can just set one of them to the third colour to obtain a valid $3$-colouring.

Now suppose that $N$ is a simple level-$k$ network for $k \ge 2$ and that all level-$(k-1)$ strictly tree-based networks are $3$-colourable. Then select some degree $3$ vertex $v$ together with a strict base tree $T$ so that $v$ has a pendant edge $p$ and edge $e=(v,w)$ in $N$ that is not in $T$, which we can do by Lemma \ref{StrictTechLemma}. We then consider the network $N^-$ obtained by deleting $e$ and suppressing $v$ and $w$ (which are necessarily degree $3$ vertices as $N$ is strictly tree-based). 

We can see that $N^-$ is a simple level-$(k-1)$ network and is thus $3$-colourable by the inductive assumption. If we subdivide the appropriate edges needed to obtain $v$ and $w$ again, then the graph is still $3$-colourable by the observation early in this Theorem. It follows that once we add $e$ back in, the only conflict is potentially between the colouring of $v$ and $w$. In particular, $v$ is adjacent to three vertices, which may be three different colours. However, one vertex adjacent to $v$ is a leaf, so we can set the leaf-vertex to be the same colour as $w$ without generating any more conflicts. Now $v$ is only adjacent to up to two different colours, so we can set $v$ to be the third colour and $N$ is $3$-colourable. It follows that all level-$k$ strictly tree-based networks are $3$-colourable, so by induction, all strictly tree-based networks are $3$-colourable.
\end{proof}

Using this, if we know some network has a subgraph $H$ such that its chromatic number $\chi(H) > 3$ we can immediately say that it is not strictly tree-based. Furthermore, if $\chi(H) >4$, it is not even tree-based.

\section{Final Comments and Further Questions}

In the initial part of the article we extended the current forms of tree-basedness to the unrooted nonbinary setting and defined a new form of tree-basedness, inspired by the spanning tree definition from Francis, Huber and Moulton in \cite{Francis2017}. These were then characterised in terms of spanning trees with particular properties. 

In the second section of the paper, we extended the concept of fully tree-based networks to the unrooted nonbinary setting, characterising each of three possible interpretations.

In the final section we proved some results on colourability of unrooted nonbinary networks.

As unrooted nonbinary tree-based networks have not yet been heavily studied, there are a number of interesting avenues for further research. For instance, given the wide variety of characterisations for tree-based networks in the binary and nonbinary rooted settings \cite{Francis2016,Francis2015,Jetten2016,Zhang2016,Pons2017}, the following natural question arises:

\begin{question} 
Are there analogous characterisations for tree-based networks for the unrooted nonbinary case, especially computationally efficient ones?
\end{question}

Certainly some of these results cannot apply directly, as many rely on the antichain-to-leaf property or modifications thereof, which only makes sense in a rooted setting.

Additionally, several results from \cite{Francis2017} may be worth considering in the new setting.  A \textit{proper} network is a network $N$ for which every cut-edge splits the leaves of $N$. The nonbinary setting also allows for the possibility of cut-vertices that do not have pendant edges, so a suitable extension of the definition of proper would include the requirement that all cut-vertices with $k$ cut-components partition the leaf set into $k$ non-empty subsets. There exist proper level-$1$ networks (in this sense) that are not tree-based and thus not strictly tree-based.  However, the author has yet to find an example of a proper loosely tree-based network (in this sense) of level less than $5$ -  one of level-$5$ is depicted in Figure \ref{NotLoose}. In the binary unrooted setting there are no networks that are not tree-based of level less than $5$ \cite{Francis2017}. 

\begin{question}
Do there exist networks of level less than $5$ that are not loosely tree-based?
\end{question}

Finally, we showed in Theorem \ref{TBChromatic} that if $N$ is a tree-based network, $\chi(N) \le 4$, and in the subsequent theorem that strictly tree-based networks are $3$-colourable. So far the author has yet to find an example of a tree-based network with chromatic number $4$.

\begin{question}
Are tree-based networks $3$-colourable?
\end{question}

Finally, we note that as determining whether an unrooted \textit{binary} network is tree-based is NP-complete, the problem is also NP-complete for determining whether an unrooted nonbinary network has a base tree of any sort, as the definitions coincide in the binary case. It is also not difficult to produce strictly nonbinary networks for which finding a base tree in the loose sense is equivalent to finding a Hamiltonian path, so the strictly nonbinary case is also NP-complete.

\section*{Acknowledgements}
The author would like to thank Professor Andrew Francis of Western Sydney University for his valuable commentary at various stages in development of this paper. The author would also like to acknowledge Western Sydney University for their support in the form of the Australian Postgraduate Award.

\bibliographystyle{amsplain}

\providecommand{\bysame}{\leavevmode\hbox to3em{\hrulefill}\thinspace}
\providecommand{\MR}{\relax\ifhmode\unskip\space\fi MR }
\providecommand{\MRhref}[2]{%
  \href{http://www.ams.org/mathscinet-getitem?mr=#1}{#2}
}
\providecommand{\href}[2]{#2}

\end{document}